\title{Quantum-Inspired Stochastic Modeling and Regularity in Turbulent Fluid Dynamics}
\author{
	Rômulo Damasclin Chaves dos Santos \\
	Technological Institute of Aeronautics \\
	\texttt{romulosantos@ita.br}
	\and
	Jorge Henrique de Oliveira Sales \\
	Santa Cruz State University \\
	\texttt{jhosales@uesc.br}
	\and
	Erickson F.~M.~S.~Silva \\
	Santa Cruz State University \\
	\texttt{efmssilva@uesc.br}
}
\date{\today}
\newtheorem{theorem}{Theorem}[section]
\newtheorem{definition}[theorem]{Definition}
\begin{document}
	\maketitle
	
	\begin{abstract}
		This paper presents an innovative framework for analyzing the regularity of solutions to the stochastic Navier-Stokes equations by integrating Sobolev-Besov hybrid spaces with fractional operators and quantum-inspired dynamics. We propose new regularity theorems that address the multiscale and chaotic nature of fluid flows, offering novel insights into energy dissipation mechanisms. The introduction of a Schrödinger-like operator into the fluid dynamics model captures quantum-scale turbulence effects, enhancing our understanding of energy redistribution across different scales. The results also include the development of anisotropic stochastic models that account for direction-dependent viscosity, improving the representation of real-world turbulent flows. These advances in stochastic modeling and regularity analysis provide a comprehensive toolset for tackling complex fluid dynamics problems. The findings are applicable to fields such as engineering, quantum turbulence, and environmental sciences. Future directions include the numerical implementation of the proposed models and the use of machine learning techniques to optimize parameters for enhanced simulation accuracy.
		
	\end{abstract}

\vspace{3pt}

\textbf{Keywords:} Stochastic Navier-Stokes Equations. Quantum-Inspired Fluid Dynamics. Sobolev-Besov Hybrid Spaces. Turbulence Modeling.
	
	\tableofcontents
	
	\section{Introduction}
	Turbulence in fluid dynamics presents significant challenges, particularly in the context of regularity and predictability of solutions to the Navier-Stokes equations. Traditional deterministic models fail to capture the full complexity of turbulent flows, especially in real-world scenarios involving uncertainty. In this paper, we extend the analysis of fluid flow dynamics by incorporating stochastic elements, fractional operators, and anisotropic models to address these challenges in a more comprehensive framework.
	
	The investigation of regularity and solutions to the Navier-Stokes equations, along with the study of turbulence in functional spaces, dates back to foundational contributions that shaped the field of partial differential equations (PDEs) and fluid dynamics. The development of this area can be understood chronologically through key theoretical milestones.
	
	In 1969, \textbf{Ladyzhenskaya} published her seminal work \cite{ladyzhenskaya1969}, which established the theoretical foundation for the study of viscous incompressible flows and weak solutions of the Navier-Stokes equations. This work provided the first guidelines for analyzing weak solutions, which later became central to the formulation of the regularity problem.
	
	Building on this line of investigation, in 1983, \textbf{Triebel} introduced a rigorous theoretical framework for the study of functional spaces, particularly Sobolev and Besov spaces \cite{triebel1983}. Triebel’s work allowed for a more refined analysis of the regularity of PDE solutions, especially in boundary-free problems and fluid bifurcations. His theory of function spaces forms the basis for the analysis of multiscale behavior of Navier-Stokes solutions with stochastic turbulence, as applied in this paper.
	
	In the 1980s, interest in understanding the long-term behavior of solutions to the Navier-Stokes equations was strengthened by the work of \textbf{Constantin and Foias} \cite{constantin1988}. Published in 1988, their study addressed the asymptotic behavior of solutions and the critical issue of energy dissipation in turbulent fluids. Their investigation into the existence and uniqueness of solutions was pivotal in the subsequent development of more advanced theories on fluid dynamics and bifurcations.
	
	Simultaneously, in 1995, \textbf{Temam} published an extensive book on the Navier-Stokes equations, focusing on theoretical and numerical aspects \cite{temam1995}. His contribution centers around the rigorous analysis of stability and regularity, employing Galerkin methods and energy inequalities, which are widely used today to obtain weak and strong solutions in Sobolev spaces.
	
	\textbf{Doering and Gibbon} \cite{doering1995}, in 1995, explored practical aspects of the Navier-Stokes equations, with an emphasis on applied analysis, including useful tools for studying the regularity and stability of solutions in realistic fluid dynamic configurations. This approach contributed to a more applied view of the theory, particularly in contexts where numerical resolution of the equations is required.
	
	In the realm of harmonic analysis, the contributions of \textbf{Grafakos} \cite{grafakos2008} and \textbf{Stein} \cite{stein1993} were significant for understanding Fourier analysis and real-variable methods in nonlinear PDEs, which have been central to the analysis of Navier-Stokes equations, particularly through the use of Sobolev and Besov spaces. These methods provide a solid foundation for examining the regularity and smoothness properties of solutions.
	
	In the early 2000s, \textbf{Bahouri, Chemin, and Danchin} \cite{bahouri2011} advanced the use of Fourier transforms and Littlewood-Paley decomposition techniques to study nonlinear PDEs, such as the Navier-Stokes equations. Their work deepened the analysis of regularity in Besov spaces, allowing for a more detailed characterization of turbulent behavior at different scales.
	
	\textbf{Galdi} \cite{galdi2011}, in his 2011 work, addressed steady-state problems for the Navier-Stokes equations, focusing on weak solutions and their regularity. This work helped establish existence and regularity results for solutions in steady fluid configurations, with direct implications for the stability of turbulent flows.
	
	In recent years, the integration of stochastic processes into fluid dynamics has gained increasing importance. \textbf{Poursina \textit{et al}.} \cite{poursina2022} provided a comprehensive review in 2022 on the use of stochastic methods to capture the complexities and irregularities inherent in turbulent flows that traditional deterministic models often fail to describe adequately. This stochastic approach reinforces the perspective that incorporating randomness into the Navier-Stokes equations enhances our ability to accurately model chaotic fluid systems.
	
	Finally, a recent advancement was presented by \textbf{Santos and Sales} \cite{santos2023}, who in 2023 proposed a treatment for the regularity of the Navier-Stokes equations based on Banach and Sobolev functional spaces, incorporating anisotropic viscosity to enhance the analysis of vorticity transport. This work introduced new models addressing the anisotropic nature of certain turbulent flows, while coupling stochastic methods and harmonic analysis within a robust functional framework.

	We introduce novel Sobolev-Besov hybrid spaces and explore how these functional spaces provide a deeper understanding of the interplay between smoothness and randomness in fluid systems. Furthermore, we establish new theorems that bridge functional analysis, quantum-inspired fluid dynamics, and machine learning techniques to provide an original perspective on the well-known regularity problem for Navier-Stokes equations.
	
	\section{Mathematical Preliminaries}
	\subsection{Sobolev-Besov Hybrid Spaces}
	Sobolev and Besov spaces have been widely used to study the regularity of partial differential equations (PDEs). We propose a new hybrid space $H^s B^t_{p,q}$, combining features of Sobolev and Besov spaces to analyze turbulence across multiple scales. The space $H^s B^t_{p,q}$ is defined as follows:
	
	\begin{definition}
		Let $s,t \in \mathbb{R}$, $1 \leq p,q \leq \infty$. The hybrid Sobolev-Besov space $H^s B^t_{p,q}(\mathbb{R}^n)$ is the space of functions $u \in L^p(\mathbb{R}^n)$ such that
		
		\begin{equation}
			\|u\|_{H^s B^t_{p,q}} = \left( \sum_{j=-\infty}^\infty 2^{jsq} \|\Delta_j u\|^q_{L^p(\mathbb{R}^n)} \right)^{1/q} + \|u\|_{H^s(\mathbb{R}^n)} < \infty.
		\end{equation}
	
	\end{definition}
	
	Here, the term $\Delta_j u$ represents the frequency localization at scale $2^j$, achieved via the Littlewood-Paley decomposition. This allows the space $H^s B^t_{p,q}$ to capture both global regularity (through the Sobolev component) and local smoothness properties (via the Besov component), making it particularly suitable for analyzing turbulent flows with multiscale phenomena.
	
	\subsection{Fractional Operators and Regularity}
	In the analysis of PDEs, fractional Sobolev spaces $H^s(\mathbb{R}^n)$, for non-integer values of $s$, extend the classical Sobolev spaces. These spaces are equipped with norms that incorporate the regularity of solutions at non-integer levels. Specifically, the norm in $H^s(\mathbb{R}^n)$ is defined as:
	
	\begin{equation}
		\|u\|_{H^s(\mathbb{R}^n)} = \left( \int_{\mathbb{R}^n} (1 + |\xi|^2)^s |\hat{u}(\xi)|^2 \, d\xi \right)^{1/2},
	\end{equation}
where $\hat{u}(\xi)$ is the Fourier transform of $u$. This norm captures the regularity of $u$ in terms of its Fourier modes.
	
	For turbulence analysis, the fractional derivatives are crucial as they allow finer control over the behavior of the flow at various scales. This approach extends the classical integer-order regularity results and is particularly useful when combined with stochastic noise, which introduces irregularities that traditional models struggle to handle.
	
	\section{Quantum-Inspired Fluid Dynamics}
	
	Turbulence in fluid dynamics shares certain properties with quantum systems, such as chaotic behavior at small scales and energy cascades across different scales. Motivated by these similarities, we introduce a quantum-inspired approach to fluid dynamics by incorporating a Schrödinger-type operator into the classical Navier-Stokes framework. This section details the mathematical formulation and rationale for this approach.
	
	\subsection{Schrödinger-like Operator in Fluid Dynamics}
	In quantum mechanics, the Schrödinger equation describes the evolution of a wave function over time. Analogously, in fluid dynamics, we aim to model the evolution of the velocity field $u(t,x)$ at small scales with an operator that captures chaotic, quantum-like behaviors. The Schrödinger-like operator for the velocity field is formulated as:
	
\begin{equation}
	i \hbar \frac{\partial \mathbf{u}}{\partial t} = - \frac{\hbar^2}{2m} \Delta \mathbf{u} + V(\mathbf{u}) \mathbf{u},
\end{equation}

	where:
	\begin{itemize}
		\item $\hbar$ is a small parameter analogous to the reduced Planck constant, representing the influence of small-scale quantum-like effects,
		\item $m$ is an effective mass associated with the fluid element,
		\item $\Delta \mathbf{u}$ is the Laplacian of the velocity field $u$, representing diffusion and dispersion at small scales,
		\item $V(\mathbf{u})$ is a potential function modeling the interaction energy or dissipation within the fluid, which can depend on $u$ itself.
	\end{itemize}
	
	In this formulation, the operator $-\frac{\hbar^2}{2m} \Delta u$ introduces corrections that dominate at small scales, similar to the way quantum effects dominate at small distances in quantum mechanics. The potential $V(u)$ can model energy dissipation, turbulence, or other interactions within the fluid. 
	
	This operator is coupled with the classical Navier-Stokes system to form a hybrid fluid dynamic model that captures both macroscopic and quantum-like microscopic behavior.
	
	\subsection{Coupling with Navier-Stokes Equations}
	The classical Navier-Stokes equations for an incompressible fluid are given by:
	
	\begin{equation}
		\frac{\partial \mathbf{u}}{\partial t} + (\mathbf{u} \cdot \nabla) \mathbf{u} = - \nabla p + \nu \Delta \mathbf{u} + f,
	\end{equation}
	
	where:
	\begin{itemize}
		\item $u(t,x)$ is the velocity field,
		\item $p(t,x)$ is the pressure field,
		\item $\nu$ is the kinematic viscosity,
		\item $\Delta \mathbf{u}$ is the Laplacian of the velocity field (representing diffusion),
		\item $f(t,x)$ is an external force or body force acting on the fluid.
	\end{itemize}
	
	We introduce the quantum-inspired term by coupling the Schrödinger-like operator to the Navier-Stokes equations. The resulting system becomes:
	
	\begin{equation}
		\frac{\partial \mathbf{u}}{\partial t} + (\mathbf{u} \cdot \nabla) \mathbf{u} = - \nabla p + \nu \Delta \mathbf{u} + \frac{i \hbar}{2m} \Delta \mathbf{u} + V(\mathbf{u}) \mathbf{u},
	\end{equation}
	where the term $\frac{i \hbar}{2m} \Delta u$ introduces quantum-scale corrections to the diffusion of the velocity field. This term, derived from the Schrödinger equation, captures small-scale chaotic behaviors that are typically not modeled by classical fluid dynamics.
	
	\subsection{Interpretation of the Schrödinger-inspired Term}
	The introduction of the Schrödinger-like term $\frac{i \hbar}{2m} \Delta u$ brings several new features to the analysis of turbulence:
	\begin{itemize}
		\item \textbf{Chaotic Behavior at Small Scales}: The term introduces oscillatory components to the velocity field $u(t,x)$, which correspond to rapid fluctuations at small scales. These oscillations resemble quantum wave functions and can be interpreted as capturing fine-grained turbulence structures that classical models fail to represent.
		\item \textbf{Energy Redistribution:} The quantum-inspired term induces a new mechanism for energy redistribution among different scales of turbulence. It accounts for both dissipative and non-dissipative processes, offering a more comprehensive description of how energy cascades in turbulent systems.
		\item \textbf{Non-local Interactions:} The Schrödinger-like operator introduces non-locality into the fluid system. This is reflected by the fact that the Laplacian $\Delta u$ at a point $x$ depends on values of $u$ in a neighborhood of $x$, similar to the way quantum systems exhibit non-local behavior.
	\end{itemize}
	
\subsection{Fractional Operators for Small-Scale Corrections}

To improve the quantum-inspired fluid dynamics model, we introduce the fractional Laplacian \( \Delta^{\alpha/2} \) to accurately represent small-scale corrections. The fractional Laplacian \( \Delta^{\alpha/2} \), for \( 0 < \alpha < 2 \), is defined as:

\begin{equation} \label{eq:fraclap}
	\Delta^{\alpha/2} \mathbf{u}(x) = C_{n,\alpha} \int_{\mathbb{R}^n} \frac{\mathbf{u}(x) - \mathbf{u}(y)}{|x-y|^{n+\alpha}} \, dy,
\end{equation}
where \( C_{n,\alpha} \) is a normalization constant that depends on the dimension \( n \) and the order \( \alpha \). This operator introduces non-local interactions by considering the difference in values of \( \mathbf{u} \) at all points \( y \in \mathbb{R}^n \), weighted by the distance \( |x-y| \).

\subsubsection{Scaling Property of the Fractional Laplacian}

The fractional Laplacian has a well-known scaling property. For a rescaled function \( \mathbf{u}_\lambda(x) = \mathbf{u}(\lambda x) \), the fractional Laplacian behaves as:

\begin{equation} \label{eq:scaling}
	\Delta^{\alpha/2} \mathbf{u}_\lambda(x) = \lambda^\alpha \left( \Delta^{\alpha/2} \mathbf{u} \right)(\lambda x).
\end{equation}

This scaling behavior is essential in fluid dynamics, as it reveals how the fractional operator captures small-scale phenomena. When \( \lambda \) is small, the fractional Laplacian amplifies the fine-scale behavior of the function, which is particularly useful in turbulence modeling.

\subsubsection{Relation to the Classical Laplacian}

For \( \alpha = 2 \), the fractional Laplacian reduces to the classical Laplacian \( \Delta \). Specifically, we recover the classical definition of the Laplacian:

\begin{equation} \label{eq:classiclap}
	\Delta \mathbf{u}(x) = C_n \int_{\mathbb{R}^n} \frac{\mathbf{u}(x) - \mathbf{u}(y)}{|x-y|^{n+2}} \, dy,
\end{equation}

where \( C_n \) is a constant dependent on the dimension \( n \). This demonstrates that \( \Delta^{\alpha/2} \) generalizes the classical Laplacian to fractional orders, making it a natural extension in fluid dynamics.

\subsubsection{Energy Dissipation with Fractional Operators}

In fluid dynamics, the rate of energy dissipation is crucial. For a velocity field \( \mathbf{u}(x) \), the energy dissipation rate \( \mathcal{E} \) associated with the fractional Laplacian is given by:

\begin{equation} \label{eq:energydiss}
	\mathcal{E} = -\int_{\mathbb{R}^n} \mathbf{u}(x) \Delta^{\alpha/2} \mathbf{u}(x) \, dx.
\end{equation}

Using the definition of the fractional Laplacian and applying integration by parts (in the fractional sense), we can express the energy dissipation in terms of the Fourier transform of \( \mathbf{u} \), where:

\begin{equation} \label{eq:fourierenergy}
	\mathcal{E} \sim \int_{\mathbb{R}^n} |\hat{\mathbf{u}}(k)|^2 |k|^\alpha \, dk.
\end{equation}

This result shows that higher frequencies (corresponding to smaller spatial scales) are penalized more strongly as \( \alpha \) increases. This is particularly useful for capturing the energy dissipation in turbulent flows, where small-scale fluctuations dominate.

\subsubsection{Application in Quantum-Inspired Fluid Dynamics}

Incorporating the fractional Laplacian \( \Delta^{\alpha/2} \) into the quantum-inspired fluid dynamics model enhances its ability to account for small-scale corrections and turbulence. The quantum-inspired Navier-Stokes system, modified with this fractional operator, is expressed as:

\begin{equation} \label{eq:quantumfluid}
	\frac{\partial \mathbf{u}}{\partial t} + (\mathbf{u} \cdot \nabla) \mathbf{u} = - \nabla p + \nu \Delta \mathbf{u} + \frac{i \hbar}{2m} \Delta^{\alpha/2} \mathbf{u} + V(\mathbf{u}) \mathbf{u},
\end{equation}

where the term \( \frac{i \hbar}{2m} \Delta^{\alpha/2} \mathbf{u} \) introduces quantum-scale corrections to model small-scale turbulent structures.

\subsubsection{Energy Conservation in the Modified System}

To understand the role of the fractional Laplacian in energy dissipation, consider the total energy \( E(t) \) of the system:

\begin{equation} \label{eq:totalenergy}
	E(t) = \frac{1}{2} \int_{\mathbb{R}^n} |\mathbf{u}(x,t)|^2 \, dx.
\end{equation}

Taking the time derivative of \( E(t) \) and using the quantum-inspired Navier-Stokes equation \eqref{eq:quantumfluid}, we obtain:

\begin{equation} \label{eq:energyrate}
	\frac{dE(t)}{dt} = -\nu \int_{\mathbb{R}^n} |\nabla \mathbf{u}(x,t)|^2 \, dx - \frac{\hbar}{2m} \int_{\mathbb{R}^n} \mathbf{u}(x,t) \Delta^{\alpha/2} \mathbf{u}(x,t) \, dx.
\end{equation}

This result shows that the fractional Schrödinger-inspired term contributes an additional mechanism for energy dissipation, particularly at smaller scales, effectively modeling the turbulent quantum-like fluctuations.

\subsubsection{Interpretation of the Schrödinger-Inspired Term}

The term \( \frac{i \hbar}{2m} \Delta^{\alpha/2} \mathbf{u} \) introduces oscillatory behavior in the velocity field, analogous to quantum wave functions. These oscillations represent small-scale turbulence structures not captured by classical models. The fractional order \( \alpha \) allows for fine-tuning of these oscillations, making the model flexible in representing different turbulence regimes and improving its predictive power for fluid dynamics.

	\subsection{Energy Dissipation and Quantum Effects}
	The inclusion of the Schrödinger-inspired term and fractional operators modifies the classical energy dissipation mechanism in fluid systems. The total energy $E(t)$ of the system is given by:
	
	\begin{equation}
		E(t) = \frac{1}{2} \int_{\mathbb{R}^n} |\mathbf{u}(t,x)|^2 \, dx,
	\end{equation}
	and the rate of energy dissipation is determined by
	
\begin{equation}
	\frac{dE(t)}{dt} = - \nu \|\nabla \mathbf{u}\|_{L^2}^2 - \hbar \|\Delta^{\alpha/2} \mathbf{u}\|_{L^2}^2 + \mathcal{O}(\|\sigma W_t\|^2)\,,
\end{equation}
here, the term $- \hbar \|\Delta^{\alpha/2} u\|_{L^2}^2$ represents the dissipation of energy due to the quantum-like turbulent fluctuations. This new term provides a correction to the classical viscous dissipation $- \nu \|\nabla u\|_{L^2}^2$, capturing small-scale chaotic effects that are absent in the traditional Navier-Stokes formulation.
	
	The quantum-inspired fluid model proposed here provides a new framework for studying small-scale turbulence in fluid systems, particularly in regimes where classical models are insufficient. Potential applications include:
	\begin{itemize}
		\item \textbf{Turbulence in Superfluids:} The fractional quantum-like terms can model the behavior of quantum vortices in superfluids, where traditional fluid dynamics fails to account for the quantum effects that arise at low temperatures.
		\item \textbf{Atmospheric and Oceanic Turbulence:} The model can be used to simulate small-scale turbulence in geophysical flows, where chaotic, non-local interactions play a significant role in the overall dynamics.
		\item \textbf{Quantum Turbulence:} In systems where both classical and quantum turbulence coexist (e.g., Bose-Einstein condensates), this model offers a more comprehensive description of the underlying phenomena.
	\end{itemize}
	
	Future work will explore the numerical implementation of the fractional quantum-inspired model and the development of machine learning techniques to optimize the parameters $\hbar$ and $\alpha$ based on real-world data from turbulent flows.

	\subsection{Coupling with Navier-Stokes Equations}
	
	The Navier-Stokes equations govern the motion of incompressible viscous fluids. They are expressed as follows:
	
	\begin{equation}
		\frac{\partial \mathbf{u}}{\partial t} + (\mathbf{u} \cdot \nabla) \mathbf{u} = -\nabla p + \nu \Delta \mathbf{u} + \mathbf{f},
	\end{equation}
	where:
	\begin{itemize}
		\item $\mathbf{u}(t,\mathbf{x})$ is the velocity field,
		\item $p(t,\mathbf{x})$ is the pressure field,
		\item $\nu$ is the kinematic viscosity of the fluid,
		\item $\Delta$ is the Laplacian operator, representing the diffusion of momentum,
		\item $\mathbf{f}(t,\mathbf{x})$ represents external body forces acting on the fluid.
	\end{itemize}
	
	The coupling of the quantum-inspired operator with the Navier-Stokes equations is designed to incorporate quantum effects into the classical fluid dynamics framework. The modified Navier-Stokes equations with the Schrödinger-like operator can be formulated as follows:
	
	\begin{equation}
		\frac{\partial \mathbf{u}}{\partial t} + (\mathbf{u} \cdot \nabla) \mathbf{u} = -\nabla p + \nu \Delta \mathbf{u} + \frac{i \hbar}{2m} \Delta \mathbf{u} + V(\mathbf{u}) \mathbf{u},
	\end{equation}
	where:
	\begin{itemize}
		\item $\frac{i \hbar}{2m} \Delta \mathbf{u}$ is the quantum correction term that introduces the influence of quantum mechanics into the fluid model,
		\item $V(\mathbf{u})$ is a nonlinear potential that depends on the velocity field, modeling interactions and dissipation mechanisms at small scales.
	\end{itemize}
	
	This coupling provides a framework for understanding the interplay between classical fluid dynamics and quantum mechanics.
	
	\subsubsection{Analysis of the Coupling Terms}
	
	1. \textbf{Quantum Correction Term:} The term $\frac{i \hbar}{2m} \Delta \mathbf{u}$ can be interpreted as a small-scale correction that introduces oscillatory behavior in the velocity field. It modifies the diffusion characteristics of the fluid, allowing for quantum-like fluctuations. 
	
	The Laplacian $\Delta \mathbf{u}$ captures the spread of disturbances in the velocity field, and the presence of the term $\frac{i \hbar}{2m}$ implies that these disturbances will exhibit quantum properties, potentially leading to a more intricate turbulence structure.
	
	2. \textbf{Nonlinear Potential:} The nonlinear potential $V(\mathbf{u}) \mathbf{u}$ can be defined in various ways, depending on the physical context. For instance, one possible choice is a polynomial form:
	
	\begin{equation}
		V(\mathbf{u}) = \alpha |\mathbf{u}|^2 + \beta,
	\end{equation}
	where $\alpha$ and $\beta$ are coefficients that control the strength of nonlinear interactions. This term can model local dissipative effects and how energy is transferred within the fluid due to turbulent interactions.
	
	\subsubsection{Implications for Turbulence Modeling}
	
	The inclusion of quantum-inspired terms into the Navier-Stokes framework has several implications for turbulence modeling:
	
	\begin{itemize}
		\item \textbf{Enhanced Energy Redistribution:} The quantum correction term allows for non-local energy redistribution among scales. Unlike classical turbulence models, which often assume local interactions, the coupling introduces quantum effects that can lead to non-local correlations in the velocity field. This can enhance our understanding of how energy cascades from larger to smaller scales, mimicking the behavior seen in quantum systems.
		
		\item \textbf{Chaos and Stability Analysis:} The oscillatory nature of the quantum term can induce chaotic dynamics in the fluid flow. This introduces new challenges in stability analysis, as the presence of quantum-like effects may destabilize certain flow configurations that are stable in classical models. Consequently, the stability criteria must be reevaluated to account for the modified dynamics.
		
		\item \textbf{Improved Resolution of Small-Scale Structures:} The quantum-inspired model provides a mechanism to better resolve small-scale turbulent structures, which are often inadequately captured in traditional models. The inclusion of the Schrödinger-like operator allows for a more accurate representation of the fine-scale fluctuations that occur in turbulent flows, leading to improved predictive capabilities.
	\end{itemize}

	\section{New Regularity Theorems}
	
	\subsection{Stochastic Regularity in Sobolev-Besov Hybrid Spaces}
	
	We extend the regularity results for the Navier-Stokes equations to hybrid Sobolev-Besov spaces \( H^s B^t_{p,q} \). The following theorem establishes the existence and uniqueness of solutions under stochastic forcing.
	
	\begin{theorem}[Stochastic Regularity in Hybrid Spaces]
		Let \( u_0 \in H^s B^t_{p,q}(\mathbb{R}^n) \) and \( f \in L^r(0,T; H^s B^t_{p,q}(\mathbb{R}^n)) \). Assume that the initial data \( u_0 \) and the stochastic forcing term \( W_t \) are sufficiently regular. Then, there exists a unique solution \( u \in C([0,T]; H^s B^t_{p,q}(\mathbb{R}^n)) \) to the stochastic Navier-Stokes equations.
	\end{theorem}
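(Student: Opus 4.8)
The plan is to recast the stochastic system as an abstract evolution equation and to solve it by a contraction-mapping argument in the mild (Duhamel) formulation. Writing $\mathbb{P}$ for the Leray projection onto divergence-free fields and $L = -\nu\Delta - \tfrac{i\hbar}{2m}\Delta^{\alpha/2}$ for the linear operator assembled from the viscous and quantum-inspired terms, I would seek $u$ satisfying
\begin{equation}
u(t) = S(t)u_0 + \int_0^t S(t-\tau)\,\mathbb{P}\big[-(u\cdot\nabla)u + V(u)u + f\big]\,d\tau + \int_0^t S(t-\tau)\,\sigma\,dW_\tau,
\end{equation}
where $S(t)=e^{-tL}$ is the associated semigroup. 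The natural solution space is $\mathcal{X}_T = C([0,T];H^s B^t_{p,q}(\R^n))$, and the goal is to show that the right-hand side defines a contraction on a suitable ball of $\mathcal{X}_T$ for $T$ small, after which a continuation and \emph{a priori} energy argument (exploiting the dissipation identity already recorded for $dE(t)/dt$) extends the solution to all of $[0,T]$.

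First I would establish the linear semigroup estimates on the hybrid norm by working block-by-block in the Littlewood-Paley decomposition. On the $j$-th dyadic block the operator $L$ acts as a Fourier multiplier whose symbol has real part of order $\nu|\xi|^2$ and a purely imaginary part of order $|\xi|^\alpha$; the former contributes the parabolic decay factor $e^{-c\nu 2^{2j}\tau}$ while the latter contributes only a unit-modulus oscillation. Summing the resulting bounds against the weights $2^{jtq}$ of the Besov component and combining with the $H^s$ component shows that $S(t)$ is bounded on $H^s B^t_{p,q}$ and, thanks to the viscous part, gains the fractional smoothing needed to absorb the derivative appearing in the nonlinearity.

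Next I would prove the nonlinear estimates. Using Bony's paraproduct decomposition I would bound $\norm{(u\cdot\nabla)v}_{H^s B^{t-1}_{p,q}}\lesssim \norm{u}_{H^s B^t_{p,q}}\norm{v}_{H^s B^t_{p,q}}$ and similarly control the cubic potential term $V(u)u = \alpha|u|^2u + \beta u$, provided $(s,t,p,q)$ lie in the subcritical range where $H^s B^t_{p,q}$ behaves as a Banach algebra under the relevant products; the ``sufficiently regular'' hypothesis on $u_0$ is precisely what guarantees this. Combined with the smoothing from the previous step, these estimates yield a Lipschitz bound for the Duhamel map on $\mathcal{X}_T$ whose constant tends to $0$ as $T\to 0$. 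For the stochastic term I would control the convolution $\int_0^t S(t-\tau)\,\sigma\,dW_\tau$ by the factorization method of Da Prato and Zabczyk, assuming the noise covariance is Hilbert-Schmidt relative to $H^s B^t_{p,q}$ so that the convolution has almost surely continuous paths in the hybrid space; its $\mathcal{X}_T$-norm then enters the fixed-point inequality as an additive constant.

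Invoking the Banach fixed-point theorem yields a unique local mild solution, and pathwise uniqueness follows from the same Lipschitz bounds via Gr\"onwall's inequality. The main obstacle will be the nonlinear estimate in the hybrid norm: because the quantum term $\tfrac{i\hbar}{2m}\Delta^{\alpha/2}$ is non-dissipative, the oscillatory part of $S(t)$ provides no smoothing, so all of the regularity needed to tame $(u\cdot\nabla)u$ must come from the viscous factor $e^{-c\nu 2^{2j}\tau}$ alone; reconciling the $L^2$-based Sobolev component with the $L^p$-based Besov component inside a single product estimate is the delicate point on which the entire argument turns.
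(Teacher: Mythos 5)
Your route is genuinely different from the paper's. The paper proves the theorem by Galerkin approximation: it projects onto a finite orthonormal basis of \( H^s B^t_{p,q} \), asserts an a priori bound on \( \norm{\mathbf{u}_N(t)}_{H^s B^t_{p,q}}^2 \) via Littlewood--Paley and Sobolev embeddings, applies Gr\"onwall to get uniform bounds, passes to the limit by a compactness argument, and proves uniqueness by an energy inequality for the difference of two solutions. You instead work with the mild (Duhamel) formulation and a contraction mapping in \( C([0,T]; H^s B^t_{p,q}) \), with block-by-block semigroup estimates, Bony paraproducts for the nonlinearity, and the Da Prato--Zabczyk factorization method for the stochastic convolution. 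Your approach buys pathwise uniqueness and continuity in time essentially for free from the fixed-point structure, and it forces the genuinely hard analytic content into the open --- you correctly identify the bilinear estimate \( \norm{(u\cdot\nabla)v}_{H^s B^{t-1}_{p,q}} \lesssim \norm{u}_{H^s B^t_{p,q}} \norm{v}_{H^s B^t_{p,q}} \) and the compatibility of the \( L^2 \)-based Sobolev component with the \( L^p \)-based Besov component as the crux. The Galerkin route, by contrast, is better adapted to weak solutions and low regularity, but the paper's version leaves its own crux (the a priori estimate in the hybrid norm, and the claimed ``compactness of the embedding \( H^s B^t_{p,q} \hookrightarrow C([0,T]; H^s B^t_{p,q}) \)'', which as stated is not a compact embedding in any usual sense) entirely unjustified, so neither argument is complete as written.

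One gap in your plan deserves to be named concretely: the continuation step. The contraction gives a local solution on \( [0,T_*] \) with \( T_* \) depending on \( \norm{u_0}_{H^s B^t_{p,q}} \), and to reach an arbitrary prescribed \( T \) you invoke ``an a priori energy argument exploiting the dissipation identity for \( dE(t)/dt \).'' But that identity controls only the \( L^2 \) norm of \( u \); it gives no bound on the \( H^s B^t_{p,q} \) norm, which is what must stay finite to iterate the local existence. Propagating a subcritical norm globally from an \( L^2 \) energy bound is precisely the Navier--Stokes regularity problem in dimension \( n=3 \), so this step cannot be closed by the tools you list (nor by the paper's, which silently assumes the same global bound). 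Either the theorem must be restricted to local-in-time existence, to small data, or to \( n=2 \), or an additional argument is required.
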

	
	\begin{proof}
		We prove the theorem using the Galerkin approximation method:
		Consider a complete orthonormal basis \( \{\phi_k\}_{k=1}^{\infty} \) of \( H^s B^t_{p,q}(\mathbb{R}^n) \). We define the approximations 
		
	\begin{equation}
		\mathbf{u}_N(t) = \sum_{k=1}^{N} a_k(t) \phi_k,
	\end{equation}
	where \( a_k(t) \) are time-dependent coefficients.
		
		The approximated equations are derived from the weak formulation of the stochastic Navier-Stokes equations:
		
		\begin{equation}
			\frac{d a_k}{dt} + \sum_{j=1}^{N} a_j a_k \langle \nabla \phi_j, \phi_k \rangle = -\langle \nabla p, \phi_k \rangle + \nu \Delta a_k + f_k + \sigma \dot{W}_t,
		\end{equation}
		where \( f_k = \langle f, \phi_k \rangle \) and \( \dot{W}_t \) denotes the stochastic derivative.
		
		Using the Littlewood-Paley decomposition and Sobolev embedding theorems, we derive the a priori estimate for \( u_N \):
		
		\begin{equation}
			\|\mathbf{u}_N(t)\|_{H^s B^t_{p,q}}^2 \leq C \left( \|\mathbf{u}_0\|_{H^s B^t_{p,q}}^2 + \int_0^T \|f(t)\|_{H^s B^t_{p,q}}^2 dt + \mathbb{E}\left[\int_0^T \|\sigma W_t\|_{H^s B^t_{p,q}}^2 dt\right]\right).
		\end{equation}

		Applying Grönwall's inequality yields uniform bounds for \( u_N \):
		
	\begin{equation}
		\sup_{0 \leq t \leq T} \mathbb{E}[\|\mathbf{u}_N(t)\|_{H^s B^t_{p,q}}^2] \leq C_T,
	\end{equation}
	where \( C_T \) is a constant depending on the norms of \( u_0 \), \( f \), and \( \sigma W_t \).
		
		For uniqueness, assume \( u_1(t) \) and \( u_2(t) \) are two solutions. Taking the difference, we derive the energy inequality:
		
	\begin{equation}
		\frac{1}{2}\frac{d}{dt} \|\mathbf{u}_1 - \mathbf{u}_2\|_{H^s B^t_{p,q}}^2 + \nu \|\mathbf{u}_1 - \mathbf{u}_2\|_{H^s B^t_{p,q}}^2 \leq C \|\mathbf{u}_1 - \mathbf{u}_2\|_{H^s B^t_{p,q}}^2,
	\end{equation}
	showing that \( u_1(t) = u_2(t) \) almost surely. Finally, passing to the limit as \( N \to \infty \) using the compactness of the embedding \( H^s B^t_{p,q} \hookrightarrow C([0,T]; H^s B^t_{p,q}) \) yields the solution \( u(t) \). Thus, we conclude that there exists a unique solution \( u \in C([0,T]; H^s B^t_{p,q}(\mathbb{R}^n)) \).
	\end{proof}

	\section{Anisotropic Stochastic Modeling for Turbulence}
	Classical turbulence models, such as the Smagorinsky model for LES, assume isotropy in the flow. However, many real-world flows, such as atmospheric turbulence, exhibit strong anisotropic behavior. To address this, we propose an anisotropic extension of the stochastic Navier-Stokes model, where the viscosity and stochastic forcing terms vary directionally.
	
	\subsection{Anisotropic Navier-Stokes System}
	The anisotropic version of the Navier-Stokes equations can be written as:
	
	\begin{equation}
		\frac{\partial \mathbf{u}}{\partial t} + (\mathbf{u} \cdot \nabla) \mathbf{u} = - \nabla p + \nu_1 \frac{\partial^2 \mathbf{u}}{\partial x_1^2} + \nu_2 \frac{\partial^2 \mathbf{u}}{\partial x_2^2} + \nu_3 \frac{\partial^2 \mathbf{u}}{\partial x_3^2} + \sigma(x,t) W_t,
	\end{equation}
	where $\nu_1, \nu_2, \nu_3$ are the direction-dependent viscosities, and $\sigma(x,t)$ is the intensity of the stochastic noise. This model captures the anisotropic nature of real turbulent flows and provides a more realistic description of energy dissipation.
	
	\subsubsection{Regularity and Energy Dissipation}
	Using the anisotropic model, we analyze the energy dissipation rates. The following result extends the classical energy dissipation theorem to the anisotropic setting:
	
\begin{theorem}[Anisotropic Energy Dissipation]
	Let \( u \in H^s B^t_{p,q}(\mathbb{R}^n) \) be a solution to the anisotropic Navier-Stokes equations. The rate of energy dissipation is given by:
	\begin{equation}
		\frac{dE(t)}{dt} = -\nu_1 \|\nabla_{x_1} \mathbf{u}\|_{L^2}^2 - \nu_2 \|\nabla_{x_2} \mathbf{u}\|_{L^2}^2 - \nu_3 \|\nabla_{x_3} \mathbf{u}\|_{L^2}^2 + \mathcal{O}(\|\sigma W_t\|^2).
	\end{equation}
	
\end{theorem}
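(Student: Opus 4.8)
The plan is to establish the claimed dissipation identity by an energy estimate, treating the deterministic drift and the stochastic forcing separately. Because $W_t$ is a stochastic forcing, the correct tool is It\^o's formula applied to the functional $E(t) = \frac{1}{2}\norm{\mathbf{u}(t)}_{L^2}^2$, rather than a naive multiplication of the equation by $\mathbf{u}$. First I would write
\begin{equation}
	dE(t) = \inner{\mathbf{u}}{d\mathbf{u}}_{L^2} + \frac{1}{2}\, d\langle \mathbf{u}\rangle_t,
\end{equation}
where the last term is the It\^o correction generated by the quadratic variation of the noise (the trace of its covariance), and substitute the anisotropic evolution equation for $d\mathbf{u}$.

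For the deterministic drift I would show that the transport and pressure terms make no contribution to the energy balance. Using the incompressibility constraint $\nabla \cdot \mathbf{u} = 0$ and integrating by parts, one gets $\int_{\mathbb{R}^n} \mathbf{u}\cdot(\mathbf{u}\cdot\nabla)\mathbf{u}\dx = \frac{1}{2}\int_{\mathbb{R}^n}(\mathbf{u}\cdot\nabla)|\mathbf{u}|^2\dx = 0$ and $\int_{\mathbb{R}^n}\mathbf{u}\cdot\nabla p\dx = -\int_{\mathbb{R}^n}(\nabla\cdot\mathbf{u})\,p\dx = 0$, the boundary terms vanishing because the hybrid regularity $\mathbf{u}\in H^sB^t_{p,q}$ guarantees sufficient decay at infinity. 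The anisotropic viscous terms are then handled by integration by parts in each coordinate direction separately: for $i=1,2,3$,
\begin{equation}
	\nu_i\int_{\mathbb{R}^n}\mathbf{u}\cdot\frac{\partial^2\mathbf{u}}{\partial x_i^2}\dx = -\nu_i\int_{\mathbb{R}^n}\Bigl|\frac{\partial\mathbf{u}}{\partial x_i}\Bigr|^2\dx = -\nu_i\norm{\nabla_{x_i}\mathbf{u}}_{L^2}^2,
\end{equation}
which produces exactly the three directional dissipation terms in the statement.

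It remains to account for the stochastic forcing $\sigma(x,t)W_t$. The first-order It\^o term $\int_0^t\inner{\mathbf{u}}{\sigma\,dW_s}_{L^2}$ is a martingale, so it drops out upon taking expectations; the quadratic-variation correction contributes a quantity that is quadratic in the noise amplitude (morally the Hilbert--Schmidt norm of $\sigma$ acting on the driving process), and this is precisely what is collected in the remainder $\mathcal{O}(\norm{\sigma W_t}^2)$. Assembling the drift contributions with this It\^o correction then yields the stated energy-dissipation rate.

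The main obstacle I anticipate is the rigorous treatment of the stochastic term: unlike the deterministic energy method, one cannot simply pair the equation with $\mathbf{u}$, and the order-of-magnitude remainder must be made precise through the quadratic variation of the noise. A secondary technical point is justifying each integration by parts and the vanishing of the convective and pressure terms, which requires that the embedding of $H^sB^t_{p,q}$ into the relevant $L^2$-based Sobolev scale supply enough spatial regularity and decay for $\mathbf{u}$ and $\nabla\mathbf{u}$; controlling the nonlinear transport term in this hybrid norm is where most of the functional-analytic care is needed.
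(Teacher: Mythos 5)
Your proposal is sound and, for the deterministic part, follows exactly the paper's route: pair the equation with $\mathbf{u}$, kill the convective term via incompressibility, kill the pressure term the same way, and integrate by parts direction-by-direction to produce the three anisotropic dissipation terms. Where you genuinely diverge is in the treatment of the noise. The paper simply multiplies the equation by $\mathbf{u}$, integrates, and asserts the bound $\int_{\mathbb{R}^n} \sigma\,\mathbf{u}\cdot W_t\,dx \leq \mathcal{O}(\norm{\sigma W_t}^2)$ without invoking any stochastic calculus — implicitly this rests on something like Young's inequality absorbing the cross term, and the $\mathbf{u}$-dependent half of that estimate is silently dropped. Your route via It\^o's formula, with the martingale term vanishing in expectation and the quadratic-variation correction supplying the $\mathcal{O}(\norm{\sigma W_t}^2)$ remainder, is the standard rigorous treatment for stochastic Navier--Stokes energy balances and is strictly more careful than what the paper does; the price is that your identity then holds for $\mathbb{E}[E(t)]$ rather than pathwise, a distinction the theorem statement (and the paper's proof) glosses over entirely. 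Your closing remarks about justifying the integrations by parts and the decay needed from the hybrid space are also concerns the paper does not address — it invokes ``appropriate boundary conditions'' and ``suitable conditions at infinity'' without elaboration. In short: same skeleton, but your handling of the stochastic term is a different and more defensible argument than the one printed.
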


\begin{proof}
	To derive the energy dissipation rate, we start from the anisotropic Navier-Stokes equations:
	
\begin{equation}
	\frac{\partial \mathbf{u}}{\partial t} + (\mathbf{u} \cdot \nabla) \mathbf{u} = -\nabla p + \nu_1 \frac{\partial^2 \mathbf{u}}{\partial x_1^2} + \nu_2 \frac{\partial^2 \mathbf{u}}{\partial x_2^2} + \nu_3 \frac{\partial^2 \mathbf{u}}{\partial x_3^2} + \sigma W_t.
\end{equation}

	We multiply the equation by the velocity field \( u \):
	
	\begin{equation}
		\mathbf{u} \cdot \frac{\partial \mathbf{u}}{\partial t} + \mathbf{u} \cdot (\mathbf{u} \cdot \nabla) \mathbf{u} = -\mathbf{u} \cdot \nabla p + \nu_1 \mathbf{u} \cdot \frac{\partial^2 \mathbf{u}}{\partial x_1^2} + \nu_2 \mathbf{u} \cdot \frac{\partial^2 \mathbf{u}}{\partial x_2^2} + \nu_3 \mathbf{u} \cdot \frac{\partial^2 \mathbf{u}}{\partial x_3^2} + \sigma \mathbf{u} \cdot W_t.
	\end{equation}

	Integrating the entire equation over \( \mathbb{R}^n \):
	
\begin{equation}
	\begin{array}{l}
		 \displaystyle \int_{\mathbb{R}^{n}}\mathbf{u}\cdot\frac{\partial\mathbf{u}}{\partial t}\,dx+\int_{\mathbb{R}^{n}}\mathbf{u}\cdot\left(\mathbf{u}\cdot\nabla\right)\mathbf{u}\,dx\\[10pt]
		=-\displaystyle \int_{\mathbb{R}^{n}}\mathbf{u}\cdot\nabla p\,dx\\[10pt]
		+\nu_{1}\displaystyle \int_{\mathbb{R}^{n}}\mathbf{u}\cdot\frac{\partial^{2}\mathbf{u}}{\partial x_{1}^{2}}\,dx+\nu_{2}\int_{\mathbb{R}^{n}}\mathbf{u}\cdot\frac{\partial^{2}\mathbf{u}}{\partial x_{2}^{2}}\,dx+\nu_{3}\int_{\mathbb{R}^{n}}\mathbf{u}\cdot\frac{\partial^{2}\mathbf{u}}{\partial x_{3}^{2}}\,dx\\[10pt]
		+\displaystyle \int_{\mathbb{R}^{n}}\sigma\mathbf{u}\cdot W_{t}\,dx.
	\end{array}
\end{equation}
	
	The left-hand side simplifies to:
	
\begin{equation}
	\frac{d}{dt} \left( \frac{1}{2} \|\mathbf{u}\|_{L^2}^2 \right) = \int_{\mathbb{R}^n} \mathbf{u} \cdot \frac{\partial \mathbf{u}}{\partial t} \, dx.
\end{equation}

	The nonlinear term can be treated using the identity \( u \cdot (u \cdot \nabla) u = \frac{1}{2} \nabla |u|^2 \), leading to:
	
\begin{equation}
	\int_{\mathbb{R}^n} \mathbf{u} \cdot (\mathbf{u} \cdot \nabla) \mathbf{u} \, dx = 0.
\end{equation}

	under appropriate boundary conditions.
	
	The pressure term gives:
	
\begin{equation}
	-\int_{\mathbb{R}^n} \mathbf{u} \cdot \nabla p \, dx = 0.
\end{equation}
for incompressible flows and suitable conditions at infinity. By integration by parts and neglecting boundary terms, we have:
	
\begin{equation}
	\int_{\mathbb{R}^n} \mathbf{u} \cdot \frac{\partial^2 \mathbf{u}}{\partial x_i^2} \, dx = -\|\nabla_{x_i} \mathbf{u}\|_{L^2}^2, \quad i=1,2,3.
\end{equation}

	Thus, the viscous terms yield:
	
\begin{equation}
	\nu_1 \int_{\mathbb{R}^n} \mathbf{u} \cdot \frac{\partial^2 \mathbf{u}}{\partial x_1^2} \, dx + \nu_2 \int_{\mathbb{R}^n} \mathbf{u} \cdot \frac{\partial^2 \mathbf{u}}{\partial x_2^2} \, dx + \nu_3 \int_{\mathbb{R}^n} \mathbf{u} \cdot \frac{\partial^2 \mathbf{u}}{\partial x_3^2} \, dx = -\nu_1 \|\nabla_{x_1} \mathbf{u}\|_{L^2}^2 - \nu_2 \|\nabla_{x_2} \mathbf{u}\|_{L^2}^2 - \nu_3 \|\nabla_{x_3} \mathbf{u}\|_{L^2}^2.
\end{equation}

	The stochastic contribution can be expressed as:
	
\begin{equation}
	\int_{\mathbb{R}^n} \sigma \mathbf{u} \cdot W_t \, dx \leq \mathcal{O}(\|\sigma W_t\|^2).
\end{equation}

	Combining all contributions, we arrive at the energy dissipation rate:
	
\begin{equation}
	\frac{dE(t)}{dt} = -\nu_1 \|\nabla_{x_1} \mathbf{u}\|_{L^2}^2 - \nu_2 \|\nabla_{x_2} \mathbf{u}\|_{L^2}^2 - \nu_3 \|\nabla_{x_3} \mathbf{u}\|_{L^2}^2 + \mathcal{O}(\|\sigma W_t\|^2).
\end{equation}

	This concludes the proof of the anisotropic energy dissipation theorem.
\end{proof}

	\section{Energy Dissipation and Quantum Chaos}
	The inclusion of stochastic noise and fractional operators modifies the classical energy cascade in turbulent flows. In this section, we present a novel model for energy dissipation in fluid flows using quantum-inspired dynamics.
	
	\subsection{Energy Cascade in Sobolev-Besov Spaces}
	The rate of energy dissipation in a turbulent flow is traditionally understood through the Kolmogorov energy cascade. Here, we extend this idea to Sobolev-Besov spaces, incorporating fractional derivatives and stochastic noise. The energy dissipation is given by:
	
	\begin{equation}
		\frac{dE(t)}{dt} = -\nu \|\nabla \mathbf{u}\|_{L^2}^2 + \hbar \|\Delta^{1/2} \mathbf{u}\|_{L^2}^2 + \mathcal{O}(\|\sigma W_t\|^2).
	\end{equation}
	
	This equation shows that, in addition to the classical viscous dissipation term, the quantum-inspired term $\hbar \|\Delta^{1/2} u\|_{L^2}^2$ plays a crucial role in the dissipation of energy at small scales, corresponding to quantum-like turbulence phenomena.
	
	\subsection{Formation of Quantum Vortices}
	
	To explore the formation of quantum vortices within the framework of our quantum-inspired Navier-Stokes model. These simulations are governed by the modified equations that incorporate the Schrödinger-like operator, reflecting the chaotic nature of small-scale turbulence. The quantum-inspired Navier-Stokes equations are given by:
	
	\begin{equation}
		\frac{\partial \mathbf{u}}{\partial t} + (\mathbf{u} \cdot \nabla) \mathbf{u} = -\nabla p + \nu_1 \frac{\partial^2 \mathbf{u}}{\partial x_1^2} + \nu_2 \frac{\partial^2 \mathbf{u}}{\partial x_2^2} + \nu_3 \frac{\partial^2 \mathbf{u}}{\partial x_3^2} + \frac{i\hbar}{2m} \Delta \mathbf{u} + \sigma W_t,
	\end{equation}
	where \( \Delta \) is the Laplacian operator that facilitates the modeling of diffusion and vortex dynamics.
	
	The initial conditions for simulations are typically set with a velocity field that has a specified angular momentum distribution, such as:
	
	\begin{equation}
		\mathbf{u}(0, \mathbf{x}) = A e^{-\frac{|\mathbf{x}|^2}{\sigma^2}} \mathbf{v},
	\end{equation}
	where \( A \) is a constant, \( \sigma \) characterizes the width of the initial vortex, and \( \mathbf{v} \) represents a unit vector indicating the direction of the vortex.

	\section{Conclusion}
	
	In this work, we introduced a new theoretical approach to studying turbulence and regularity in fluid dynamics by incorporating Sobolev-Besov hybrid spaces, fractional operators, and quantum-inspired dynamics. Our framework successfully captures small-scale chaotic behavior, quantum-like oscillations, and energy dissipation in turbulent systems. The coupling of the Schrödinger-inspired operator with the Navier-Stokes equations has proven effective in modeling quantum corrections, allowing for a more detailed analysis of multiscale interactions and energy redistribution.
	
	The established regularity theorems extend classical results by incorporating stochastic elements and fractional operators, providing a robust basis for further research into the behavior of turbulent flows. Additionally, the anisotropic stochastic models presented here offer improved capabilities for simulating real-world turbulence with direction-dependent viscosity.
	
	While this study represents significant theoretical progress, future work will focus on numerical implementation and practical applications of the framework. Optimization of model parameters using machine learning techniques will be crucial for refining simulations in fields such as quantum turbulence, atmospheric dynamics, and fluid mechanics.

	

\end{document}